\newtheorem{Corollary}{Corollary}
\newtheorem{Proposition}{Proposition}
\newcommand*\patchAmsMathEnvironmentForLineno[1]{%
  \expandafter\let\csname old#1\expandafter\endcsname\csname #1\endcsname
  \expandafter\let\csname oldend#1\expandafter\endcsname\csname end#1\endcsname
  \renewenvironment{#1}%
  {\linenomath\csname old#1\endcsname}%
  {\csname oldend#1\endcsname\endlinenomath}}%
\newcommand*\patchBothAmsMathEnvironmentsForLineno[1]{%
  \patchAmsMathEnvironmentForLineno{#1}%
  \patchAmsMathEnvironmentForLineno{#1*}}%
\begin{document}

\def\spacingset#1{\renewcommand{\baselinestretch}%
  {#1}\small\normalsize} \spacingset{1}

\newcommand{\blind}{0}

\setcounter{footnote}{0}

\title{Density and Distribution Evaluation for Convolution of Independent Gamma Variables}

\if0\blind
{
  \author{
    Chaoran Hu, Vladimir Pozdnyakov, and Jun Yan\\
    Department of Statistics\\
    University of Connecticut\\
    215 Glenbrook Road, Storrs, Connecticut 06269, U.S.A.
    }
}\fi

\if1\blind
{
\author{}
} \fi

\maketitle

\begin{abstract} 
Several numerical evaluations of the density and distribution of convolution
of independent gamma variables are compared in their accuracy and speed.
In application to renewal processes, an efficient formula is derived for the
probability mass function of the event count.

\bigskip

\noindent%
{\it Keywords:}  Confluent hypergeometric function,
Convolution, Gamma distribution, Renewal process
\end{abstract}

\doublespacing

\section{Introduction}
\label{sec:intr}

Fast and precise evaluation of the density and distribution function of
convolution of independent gamma variables is important in many applications
such as storage capacity measurement \citep[e.g.,][]{mathai1982storage},
reliability analysis \citep[e.g.,][]{kadri2015markov}, and point processes
\citep[e.g.,][]{sim1992point}. Let $X_{1}, \dots, X_{n}$ be $n$ mutually
independent random variables that have
gamma distributions with shape parameters $\alpha_{i} > 0$ and scale
parameters $\beta_{i} > 0$, $j = 1, \dots, n$. Then random variable
$Y = \sum_{i=1}^n X_i$ is the convolution of independent gamma variables.
Without loss of generality, the scale parameters $\beta_i$'s can be assumed to
of distinctive values; otherwise, variables with the same scales can be summed
before the convolution. Evaluation of the density
and distribution of $Y$ is the focus of this paper.

Exact evaluations, which do not have simple closed forms, are challenging.
\citet{mathai1982storage} was the first to give an expression of the density
in terms of multiple infinite series for the general case of arbitrary shape
and scale parameters; the multiple series is a confluent hypergeometric
function of $n - 1$ variables. When $n = 2$, the confluent hypergeometric
function is univariate with efficient implementations in the GNU Scientific
Library (GSL) \citet{galassi2009gnu}, which was adopted by \citet{di2006exact}.
\citet{mathai1982storage} also gave relatively easier-to-compute expressions
in the special cases where all shapes are integers or are identical.
\citet{moschopoulos1985distribution} simplified the complex expression into a
single gamma-series representation along with a formula for the truncation
error to evaluate the precision of numerical computation.
\citet{akkouchi2005convolution} indicated that the density of $Y$
can be expressed through an integral of the generalized beta function, but did not
give explicit ways to numerically evaluate the integral. More recently,
\citet{vellaisamy2009sums} proposed a random parameter representation of the
gamma-series of \citet{moschopoulos1985distribution}, where the weights define
the probability mass function of a discrete distribution on non-negative
integers. Implementations of the gamma-series methods are simple but the
computation is too much CPU-time consuming when the variability of the scale
parameters is large and the shape parameters are small.
Built on the representation of \citet{vellaisamy2009sums},
\citet{barnabani2017approximation} proposed a fast approximation which
approximates the weights of the gamma-series by a discrete distribution.

The contribution of this article is two-fold.
First, we give a computationally review of the methods of
\citet{mathai1982storage} and \citet{moschopoulos1985distribution} and provide
their implementations in our R package \texttt{coga} \citep{Rpkg:coga}.
Their speeds are compared in cases of $n = 2$ and $n = 3$.
In the case of $n = 3$, the accuracy of the fast approximation of
\citet{barnabani2017approximation} is also assessed using our implementation.
Second, in an application to renewal processes with holding times following a
mixture of exponential distributions, we derive a new formula for the
probability mass function of the number of renewals by a given amount of time,
which provides very fast exact evaluations in a numerical study.

\section{Exact Evaluations}
\label{sec:exact}

Let us introduce some notations first. Let $G(y; \alpha, \beta)$ and
$g(y; \alpha, \beta)$ be, respectively, the distribution and density function
of a gamma variable with shape $\alpha$, and scale $\beta$. For the special
case where $\beta = 1$, we use $G(y; \alpha)$ and $g(y; \alpha)$,
respectively. Let $F(x; (\alpha_1, \beta_1), (\alpha_2, \beta_2))$ and
$f(x; (\alpha_1, \beta_1), (\alpha_2, \beta_2))$, respectively, be the
distribution and density function of $Y$ in the case of $n = 2$.
Finally, let $(x)_m = x(x+1) \dots (x+m-1)$ be the Pochhammer polynomial
\citep[Equation~6.1.22]{abramowitz1972handbook}.

\subsection{Mathai's Method}
\label{subsec:mat}

\citet{mathai1982storage} expresses the density of $Y$ via a multiple infinite
series
\begin{equation}
  \label{eq:den:mat}
  f(x) = \left[\prod_{j=1}^{n}\beta_j^{\alpha_j}\Gamma(\gamma)\right]^{-1}
  x^{\gamma-1}e^{-x/\beta_1}
 \phi\big(
  \alpha_2, \ldots, \alpha_n; \gamma;
  (1/\beta_1 - 1/\beta_2) x, \ldots, (1/\beta_1 - 1/\beta_n) x
  \big),
\end{equation}
where $\beta_1 = \min_j(\beta_j)$, $\gamma = \sum_{j=1}^{n} \alpha_j$,
and $\phi$ is a confluent hypergeometric
function of $n - 1 $ variables defined by a multiple series
\begin{align*}
  &\, \phi\big(
    \alpha_2, \ldots, \alpha_n; \gamma;
    (1/\beta_1 - 1/\beta_2) x, \ldots, (1/\beta_1 - 1/\beta_n) x
    \big)\\
  = &\sum_{r_2=0}^{\infty} \dots \sum_{r_n=0}^{\infty} \Big\{ (\alpha_2)_{r_2}
    \dots (\alpha_n)_{r_n}
    \left[(1/\beta_1-1/\beta_2)x \right]^{r_2} \dots
      \left[(1/\beta_1-1/\beta_n)x \right]^{r_n} / \left[r_2! \dots r_n! (\gamma)_r \right] \Big\},
\end{align*}
a special function which has been studied in the literature
\citep{mathai1978hfunction}.
With the gamma function kernels, $x^{\gamma + r_i - 1}e^{-x/\beta_1}$,
The distribution function can be expressed in terms of incomplete gamma
functions by term-by-term integration of Equation~\eqref{eq:den:mat}.

For the special case of $n = 2$, the density is expressed
in terms of the Kummer confluent hypergeometric function, $_1F_1$ as
\citep[Formula~13.1.2]{abramowitz1972handbook},
\begin{equation}
  \label{eq:dco2ga:hyper}
  \begin{aligned}
  f(x; (\alpha_1, \beta_1), (\alpha_2, \beta_2)) &= \frac{x^{\gamma - 1} e^{- x / \beta_1}}
  {\beta_1^{\alpha_1} \beta_2^{\alpha_2} \Gamma(\gamma)}
  {}_1F_1(\alpha_2; \gamma; (1/\beta_{1} - 1/\beta_{2})x)\\
  &= \left( \frac{\beta_1}{\beta_2} \right)^{\alpha_2} g(x; \gamma, \beta_1)
  {}_1F_1(\alpha_2; \gamma; (1/\beta_{1} - 1/\beta_{2})x).
  \end{aligned}
\end{equation}
The benefit of Equation~\eqref{eq:dco2ga:hyper} is that the GSL
\citep{galassi2009gnu} has an implementation of $_1F_1$. Note that the
condition $\beta_1 < \beta_{2}$ is not needed in~\eqref{eq:dco2ga:hyper}.
Indeed, if $\beta_1 > \beta_2$, then
\begin{align*}
    f(x; (\alpha_1, \beta_1), (\alpha_2, \beta_2))
    &= f(x; (\alpha_2, \beta_2), (\alpha_1, \beta_1))\\
    &= \frac{x^{\gamma - 1} e^{- x / \beta_2}}
    {\beta_1^{\alpha_1} \beta_2^{\alpha_2} \Gamma(\gamma)}
    {}_1F_1(\alpha_1; \gamma; (1/\beta_{2} - 1/\beta_{1})x)\\
    &= \frac{x^{\gamma - 1} e^{- x / \beta_2}}
    {\beta_1^{\alpha_1} \beta_2^{\alpha_2} \Gamma(\gamma)}
    e^{-(1/\beta_2 - 1/\beta_1)x}
    {}_1F_1(\alpha_2; \gamma; (1/\beta_{1} - 1/\beta_{2})x)\\
    &= \frac{x^{\gamma - 1} e^{- x / \beta_1}}
    {\beta_1^{\alpha_1} \beta_2^{\alpha_2} \Gamma(\gamma)}
    {}_1F_1(\alpha_2; \gamma; (1/\beta_{1} - 1/\beta_{2})x),
\end{align*}
where the third equation follows from ${}_1F_1(a; b; z) \equiv e^z {}_1F_1(b -
a; b; -z)$ \citep[Equation~13.1.27]{abramowitz1972handbook}.

The distribution function when $n = 2$ can be explicitly expressed as,
for $y > 0$,
\begin{equation}
  \begin{aligned} \label{eq:pco2ga}
   & F(y; (\alpha_1, \beta_1), (\alpha_2, \beta_2))\\
    =\, & \frac{1}{\beta_1^{\alpha_1} \beta_2^{\alpha_2}}
    \sum_{k = 0}^{\infty} \frac{\binom{\alpha_2 + k - 1}{k}}{\Gamma(\gamma + k)}
    (1/\beta_1 - 1/\beta_2)^k
    \int_{0}^{y} x^{k + \gamma - 1} e^{-x/\beta_1} dx\\
    =\,& \frac{1}{\beta_1^{\alpha_1} \beta_2^{\alpha_2}}
    \sum_{k = 0}^{\infty} \frac{\binom{\alpha_2 + k - 1}{k}}{\Gamma(\gamma + k)}
    (1/\beta_1 - 1/\beta_2)^k
    \beta_1^{k+\gamma} G(y/\beta_1; k+\gamma) \Gamma(k+\gamma)\\
    =\,& \left(\frac{\beta_1}{\beta_2}\right)^{\alpha_2}
    \sum_{k = 0}^{\infty} \binom{\alpha_2 + k - 1}{k} (1-\beta_1/\beta_2)^k
    G(y/\beta_1; k + \gamma).
  \end{aligned}
\end{equation}

\subsection{Moschopoulos' Method}
\label{subsec:mos}

\citet{moschopoulos1985distribution} expresses the density of $Y$
by a single gamma series with coefficients that can be calculated recursively:
\begin{align*}
  f(x)
  =\, &C \sum_{k = 0}^{\infty} \delta_{k} x^{\rho + k - 1} e^{-x/\beta_1}
   / \left[\Gamma(\rho + k) \beta_1^{\rho + k} \right]\\
  =\, & C \sum_{k = 0}^{\infty} \delta_{k} g(x; \rho + k, \beta_1), x > 0,
\end{align*}
where $\beta_1=\min_i(\beta_i)$, $C=\prod_{i = 1}^{n}(\beta_1/\beta_i)^{\alpha_i}$,
$\rho=\sum_{i=1}^{n}\alpha_i > 0$, and $\delta_k$ is given by the recursive
relations
\begin{equation*}
  \label{eq:rec:mos}
  \delta_{k+1} = \frac{1}{k+1}\sum_{i=1}^{k+1} i \gamma_i \delta_{k+1-i},
  k =0,1,2,\dots,
\end{equation*}
with $\delta_0=1$ and $\gamma_k = \sum_{i=1}^{n} \alpha_i(1-\beta_1/\beta_i)^k
/k$ for $k = 1,2,\dots$. This expression facilitates distribution function
evaluation as
\begin{equation*}
  F(y) = C \sum_{k=0}^{\infty} \delta_k G(y; \rho + k, \beta_1), \qquad y > 0.
\end{equation*}
The weights $C\delta_k$'s can be viewed as the probability masses of a
discrete random variable on non-negative integers \citep{vellaisamy2009sums}.
When $n = 2$, this discrete distribution is negative binomial. For $n > 2$,
\citet{barnabani2017approximation} proposed to approximate the discrete
distribution by a three-parameter generalized negative binomial distribution
defined by \citet{jain1971generalized} through moment matching.

\subsection{Timing Comparison}
\label{subsec:timeinf}

We implemented the methods of \citet{mathai1982storage} and
\citet{moschopoulos1985distribution} in an open source R package
\texttt{coga} \citep{Rpkg:coga}. The computation is done in \texttt{C++} code
and interfaced to R \citep{R} in the \texttt{coga} package.
In addition, the fast approximation of \citet{barnabani2017approximation} is
also available in the package for $n > 2$.

\begin{table}[tbp]
  \centering
  \caption{Timing comparison (in microseconds) of Mathai's and Moschopoulos' methods
    when $n = 2$ in evaluating the density and distribution function of
    convolutions of independent gamma variables.}
  \label{tab:coga2}
\begin{tabular}{rrrrrrr}
  \toprule
 \multicolumn{3}{c}{Parameters}
 & \multicolumn{2}{c}{Density} & \multicolumn{2}{c}{Distribution Function}\\
 \cmidrule(r){1-3} \cmidrule(r){4-5} \cmidrule(r){6-7}
 $\alpha$ & $\beta_1$ & $\beta_2$ & Moschopoulos & Mathai & Moschopoulos & Mathai \\
  \midrule
  0.2 & 0.4 & 0.3 & 23,030 & 86 & 25,011 & 2,648 \\ 
   & 4 & 0.3 & 103,987 & 176 & 110,804 & 9,018 \\ 
   & 4 & 3 & 24,566 & 88 & 25,696 & 2,849 \\ 
  2 & 0.4 & 0.3 & 29,163 & 94 & 31,030 & 3,086 \\ 
   & 4 & 0.3 & 165,901 & 96 & 173,916 & 13,590 \\ 
   & 4 & 3 & 30,378 & 90 & 33,489 & 3,397 \\ 
  20 & 0.4 & 0.3 & 53,231 & 103 & 57,468 & 6,471 \\ 
   & 4 & 0.3 & 538,807 & 175 & 566,857 & 37,527 \\ 
   & 4 & 3 & 53,259 & 108 & 58,479 & 6,604 \\ 
 \bottomrule
\end{tabular}
\end{table}

We first compare the speed of the two methods in the case of $n = 2$.
The shape parameters were set to be
$\alpha_1 =  \alpha_2 \in \{0.2, 2, 20\}$.
The scale parameters were set to be
$(\beta_1, \beta_2) \in \{(0.4, 0.3), (4, 0.3), (4, 3) \}$.
For each configuration, we used a large number ($100,000$) of
simulated observations from the distribution to determine the bulk
range of the observations. Then we evaluate the density and the distribution
of the convolution over 100~equally spaced grid points in the bulk range.
The evaluations were repeated 100 times.

Table~\ref{tab:coga2} summarizes the median time to evaluate density and
distribution at the 100-point grid from 100 replicates obtained on an Intel
2.50GHz computer. Density evaluation using Mathai's method (implemented with
the $_1F_1$ function from the GSL) performs much faster than Moschopoulos'
method in all settings; in some settings, it is up to 3,000 times faster
Distribution evaluation takes much longer than density evaluation using
Mathai's method, but is still up to 16 times faster than Moschopoulos' method.
Moschopoulos' method takes longer when the scale parameters are very
different, $(\beta_1, \beta_2) = (4, 0.3)$, or the shape parameters bigger.
Mathai's method is much less sensitive to the parameter settings.

\begin{table}[tbp]
  \centering
  \caption{Timing comparison (in milliseconds) of Mathai's, Moschopoulos'
    exact methods and Barnabani's approximation method when $n = 3$.}
  \label{tab:coga3}
\begin{tabular}{rrrrrrrrrr}
  \toprule
 \multicolumn{4}{c}{Parameters}
 & \multicolumn{3}{c}{Density}
 & \multicolumn{3}{c}{Distribution Function}\\
 \cmidrule(r){1-4} \cmidrule(r){5-7} \cmidrule(r){8-10}
 $\alpha$ & $\beta_1$ & $\beta_2$ & $\beta_3$ & Mosch. & Mathai & Approx. & Mosch. & Mathai & Approx.\\
  \midrule
  0.2 & 0.4 & 0.3 & 0.2 & 37 & 1,223 & 6 & 39 & 1,428 & 8 \\ 
   & 4 & 0.3 & 0.2 & 167 & 5,796 & 18 & 181 & 8,067 & 28 \\ 
   & 4 & 3 & 0.2 & 186 & 10,208 & 19 & 197 & 14,000 & 30 \\ 
   & 4 & 3 & 2 & 38 & 1,245 & 6 & 40 & 1,474 & 8 \\ 
  2 & 0.4 & 0.3 & 0.2 & 48 & 1,044 & 8 & 51 & 1,328 & 11 \\ 
   & 4 & 0.3 & 0.2 & 242 & 5,333 & 21 & 253 & 8,975 & 35 \\ 
   & 4 & 3 & 0.2 & 313 & 12,597 & 25 & 331 & 20,646 & 43 \\ 
   & 4 & 3 & 2 & 49 & 1,082 & 8 & 53 & 1,415 & 11 \\ 
  20 & 0.4 & 0.3 & 0.2 & 109 & 3,250 & 14 & 119 & 4,721 & 22 \\ 
   & 4 & 0.3 & 0.2 & 780 & 16,083 & 29 & 950 & 40,704 & 78 \\ 
   & 4 & 3 & 0.2 & 596 & 21,553 & 18 & 1,418 & 133,699 & 101 \\ 
   & 4 & 3 & 2 & 110 & 3,329 & 14 & 123 & 4,953 & 23 \\ 
  \bottomrule
\end{tabular}
\end{table}

Following the design and steps in the case of $n = 2$, we conducted a
numerical analysis for $n = 3$. The shape parameters were set to be
$\alpha_1 = \alpha_2 = \alpha_3 \in \{0.2, 2, 20\}$.
The scale parameters were set to be
$(\beta_1, \beta_2, \beta_3) \in \{(0.4, 0.3, 0.2), (4, 0.3, 0.2),
(4, 3, 0.2), (4, 3, 2)\}$.
The two methods gave numerically indistinguishable results in
both density and distribution evaluations, verifying each other.
Table~\ref{tab:coga3} summarize the median timing results from 100 replicates.
When $n > 2$, the multivariable confluent hypergeometric function has no efficient
implementations yet to the best of our knowledge. Therefore, Mathai's method
needs to evaluate $n - 1$ nested infinite series, which makes it quite
complicated \citep{jasiulewicz2003convolutions, sen1999convolution}.
In this case, Moschopoulos' method is preferred for its single-series.
The approximation of \citet{barnabani2017approximation} was also included in
the comparison, which is up to $30$ times faster in density evaluation and $14$
times faster in distribution evaluation than Moschopoulos' method.

\begin{figure}[tbp]
\centering
\includegraphics[angle=0, scale=0.8]{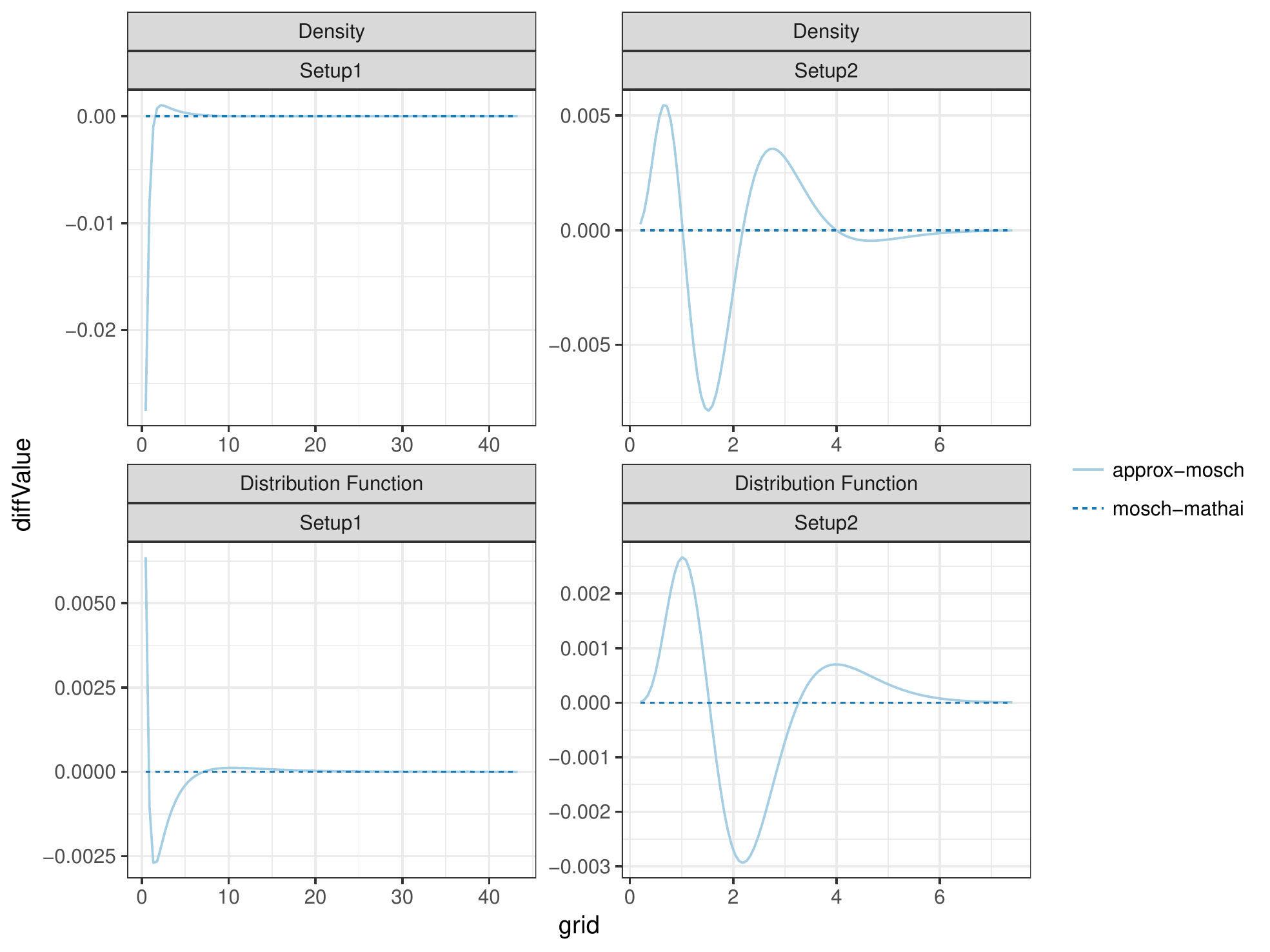}
\caption{Differences between the approximation method and the exact methods in
  evaluating the density and distribution function of convolution of three
  independent gamma variables.
  The parameter in setup~1 are $\alpha_1 = \alpha_2 = \alpha_3 = 0.2$,
  $\beta_1 = 4$, $\beta_2 = 0.3$, and $\beta_3 = 0.2$.
  The parameter in setup~2 are $\alpha_1 = \alpha_2 = \alpha_3 = 2$,
  $\beta_1 = 0.4$, $\beta_2 = 0.3$, and $\beta_3 = 0.2$.}
\label{fig:error}
\end{figure}

The exact implementations make it possible to assess the accuracy of the
approximation approach of \citet{barnabani2017approximation}.
The differences between the approximation and exact evaluation in the two
worst cases out of a collection of settings we experimented are shown in
Figure~\ref{fig:error}. Apparently, the approximation is very accurate.

\section{Application to a Renewal Process}
\label{sec:appl}

Let $\{ U_k \}_{k \geq 1}$ be independent and identically distributed random
variables following a mixture of two exponential distributions,
$\mathcal{E}xp(\beta_1)$ and $\mathcal{E}xp(\beta_2)$,
with weights $p \in (0, 1)$ and $1 - p$, respectively.
For any $t > 0$, define
\begin{equation}
  \label{eq:renew}
  N(t) = \sup\{ n \geq 0 : \sum_{k=1}^{n} U_k \leq t \},
\end{equation}
where, by convention, the summation over an empty set is~$0$. The process
$N(t)$, $t > 0$, is a renewal process. The following result gives an expression
of the distribution of $N(t)$ in terms of the Kummer confluent hypergeometric
function, which allows a very efficient numerical evaluation.

\begin{Proposition}\label{thm:renew}
For the renewal process defined in~\eqref{eq:renew} and integer $n \ge 0$,
\begin{align*}
  \Pr\big(N(t) = n\big) =
  \sum_{k=0}^{n} & \big\{p \psi(\beta_1, \beta_2, t, k, n) +
  (1 - p) \psi(\beta_2, \beta_1, t, n - k, n) \big\} \times \\
  & \frac{t^n}{\beta_1^k \beta_2^{n-k} \Gamma(n+1)} {n\choose k} p^k (1-p)^{n-k},
\end{align*}
where
\[
\psi(\beta_1, \beta_2, t, k, n) =
e^{-t/\beta_1} {}_1F_1\big(n-k;n+1;t(1/\beta_1-1/\beta_2)\big).
\]
\end{Proposition}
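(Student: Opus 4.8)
The plan is to combine the standard renewal identity with the closed form of the two-gamma convolution density in~\eqref{eq:dco2ga:hyper}. Write $S_m = U_1 + \dots + U_m$ with $S_0 = 0$, so that $\{N(t) = n\} = \{S_n \le t,\ U_{n+1} > t - S_n\}$. Let $K$ be the number of $U_1,\dots,U_n$ drawn from the $\mathcal{E}xp(\beta_1)$ component of the mixture; then $K \sim \mathrm{Binomial}(n,p)$, and conditionally on $\{K=k\}$ the partial sum $S_n$ has the law of a $\mathrm{Gamma}(k,\beta_1)$ variable convolved with an independent $\mathrm{Gamma}(n-k,\beta_2)$ variable, i.e.\ density $f\big(\cdot;(k,\beta_1),(n-k,\beta_2)\big)$. (A shape-$0$ gamma is read as a point mass at $0$; one checks that \eqref{eq:dco2ga:hyper} collapses to $g(\cdot;n,\beta_2)$ at $k=0$ and to $g(\cdot;n,\beta_1)$ at $k=n$ via ${}_1F_1(a;a;z)=e^z$, and $n=0$ reduces to $\Pr(U_1>t)$, so these endpoints need no separate argument.) Since $U_{n+1}$ is independent of $S_n$ and $\Pr(U_{n+1}>u) = p\,e^{-u/\beta_1} + (1-p)\,e^{-u/\beta_2}$ for $u\ge 0$, conditioning on $K$ and then on $S_n$ gives
\[
  \Pr(N(t)=n) = \sum_{k=0}^{n}\binom{n}{k}p^k(1-p)^{n-k}\!\int_0^t f\big(s;(k,\beta_1),(n-k,\beta_2)\big)\Big[p\,e^{-(t-s)/\beta_1}+(1-p)\,e^{-(t-s)/\beta_2}\Big]ds.
\]

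Next I would evaluate the two integrals. For the $e^{-(t-s)/\beta_1}$ piece, insert the version of~\eqref{eq:dco2ga:hyper} whose exponential kernel is $e^{-s/\beta_1}$, namely $f(s;\cdot) = s^{n-1}e^{-s/\beta_1}\big/\big(\beta_1^k\beta_2^{n-k}\Gamma(n)\big)\cdot{}_1F_1\big(n-k;n;(1/\beta_1-1/\beta_2)s\big)$; the two exponentials cancel, pulling $e^{-t/\beta_1}$ outside the integral. Expanding $_1F_1$ as its defining power series, interchanging sum and integral (legitimate by absolute convergence of the series on $[0,t]$ together with $\int_0^t s^{n-1+j}ds = t^{n+j}/(n+j)<\infty$, so Tonelli applies), and using the Pochhammer identity $(n)_j(n+j)=n\,(n+1)_j$, the resulting series collapses to $\tfrac{t^n}{n}\,{}_1F_1\big(n-k;n+1;t(1/\beta_1-1/\beta_2)\big)$; with $n\Gamma(n)=\Gamma(n+1)$ this yields $\int_0^t f(s;\cdot)e^{-(t-s)/\beta_1}ds = t^n\psi(\beta_1,\beta_2,t,k,n)\big/\big(\beta_1^k\beta_2^{n-k}\Gamma(n+1)\big)$. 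For the $e^{-(t-s)/\beta_2}$ piece, instead use the representation with kernel $e^{-s/\beta_2}$ obtained from~\eqref{eq:dco2ga:hyper} by Kummer's transformation ${}_1F_1(a;b;z)=e^z{}_1F_1(b-a;b;-z)$ displayed in Section~\ref{subsec:mat}, and rerun the same computation with $\beta_1,\beta_2$ and $k,n-k$ swapped, obtaining $t^n\psi(\beta_2,\beta_1,t,n-k,n)\big/\big(\beta_1^k\beta_2^{n-k}\Gamma(n+1)\big)$.

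Substituting both evaluated integrals back into the displayed sum reproduces the claimed formula verbatim. The routine components are the termwise integration and the Pochhammer bookkeeping; the step I expect to carry the actual content — and where the proof must be written carefully — is the pairing of each exponential tail of $U_{n+1}$ with the matching exponential kernel of the convolution density so that the kernels cancel, together with the observation that the factor $1/(n+j)$ produced by integrating $s^{n-1+j}$ is exactly what raises the lower parameter of $_1F_1$ from $n$ to $n+1$ and turns a gamma density into the function $\psi$.
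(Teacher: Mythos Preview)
Your argument is correct, but it follows a genuinely different route from the paper's. The paper conditions in the same way on the component of $U_{n+1}$ and on the binomial count $K$, but then works at the level of distribution functions: it writes each summand as a difference $H(y;(\alpha_1,\beta_1),(\alpha_2,\beta_2))=F(y;(\alpha_1,\beta_1),(\alpha_2,\beta_2))-F(y;(\alpha_1+1,\beta_1),(\alpha_2,\beta_2))$, inserts the series representation~\eqref{eq:pco2ga} of $F$, and collapses the difference via the incomplete-gamma identity $G(y;\alpha)-G(y;\alpha+1)=y^{\alpha}e^{-y}/\Gamma(\alpha+1)$ to obtain the closed form~\eqref{eq:recur} for $H$ in terms of ${}_1F_1$. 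You instead stay at the density level, integrate $f\big(s;(k,\beta_1),(n-k,\beta_2)\big)$ from~\eqref{eq:dco2ga:hyper} against the survival function of $U_{n+1}$, and use the cancellation of exponential kernels together with termwise integration (the Pochhammer identity $(n)_j(n+j)=n\,(n+1)_j$) to raise the lower parameter of ${}_1F_1$ from $n$ to $n+1$. Your path is slightly more elementary, needing only the density formula and a single series manipulation; the paper's path has the advantage of isolating~\eqref{eq:recur} as a standalone identity for the CDF increment $H$, which the authors flag in their concluding section as independently useful in several related problems.
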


\begin{proof}
First, note that
\begin{equation*}
  \begin{aligned}
    &\Pr(N(t) = n) = \Pr \left(\sum_{k=1}^{n} U_k \leq t,
      \sum_{k=1}^{n+1} U_k > t \right)\\
    = & p  \Pr \left(\sum_{k=1}^{n} U_k \leq t, \sum_{k=1}^{n} U_k + E_1 > t\right)
    +(1-p) \Pr \left(\sum_{k=1}^{n} U_k \leq t, \sum_{k=1}^{n} U_k + E_2 > t\right),
  \end{aligned}
\end{equation*}
where $E_1$ and $E_2$ are independent of $\{ U_k \}_{k \geq 1}$ with
$\mathcal{E}xp(\beta_1)$ and $\mathcal{E}xp(\beta_2)$ distributions,
respectively. Next, using mixture randomization we get
\begin{equation*}
  \begin{aligned}
    \Pr \left(\sum_{k=1}^{n} U_k \leq t, \sum_{k=1}^{n} U_k + E_1 > t \right)
    &= \Pr \left(\sum_{k=1}^{n} U_k \leq t\right) - \Pr \left(\sum_{k=1}^{n} U_k + E_1 \leq t\right)\\
    &= \sum_{k=0}^{n} H(t; (k, \beta_1), (n-k, \beta_2))
     {n\choose k} p^k (1-p)^{n-k},
  \end{aligned}
\end{equation*}
where $H(x; \alpha_1, \beta_1, \alpha_2, \beta_2) = F(x; \alpha_1,
\beta_1, \alpha_2, \beta_2) - F(x; \alpha_1 + 1, \beta_1, \alpha_2, \beta_2)$.

Similarly,
\begin{equation*}
  \begin{aligned}
    \Pr \left(\sum_{k=1}^{n} U_k \leq t, \sum_{k=1}^{n} U_k + E_2 > t\right)
    &= \Pr \left(\sum_{k=1}^{n} U_k \leq t\right) - \Pr \left(\sum_{k=1}^{n} U_k + E_2 \leq t\right)\\
    &= \sum_{k=0}^{n} H(t; (n-k, \beta_2), (k, \beta_1))
     {n\choose k} p^k (1-p)^{n-k}.
  \end{aligned}
\end{equation*}
That is
\begin{equation}\label{eq:raw}
  \begin{aligned}
    &\Pr (N(t) = n)\\
    =& \sum_{k=0}^{n} \left[p  H(t; (k, \beta_1), (n-k, \beta_2))
      + (1-p)  H(t; (n-k, \beta_2), (k, \beta_1))\right] {n\choose k} p^k (1-p)^{n-k}
  \end{aligned}
\end{equation}

Now it is sufficient to show that for any positive integers $\alpha_1$,
$\alpha_2$ and $y, \beta_1, \beta_2 > 0$,
\begin{equation}
  \label{eq:recur}
  H(y; (\alpha_1, \beta_1), (\alpha_2, \beta_2))
  = \frac{y^{\alpha_1 + \alpha_2} e^{-y/\beta_1}}
  {\beta_1^{\alpha_1} \beta_2^{\alpha_2} \Gamma(\alpha_1 + \alpha_2 + 1)}
  {}_1F_1(\alpha_2; \alpha_1 + \alpha_2 + 1; y(1/\beta_1 - 1/\beta_2)).
\end{equation}
Firstly, from equation~\eqref{eq:pco2ga} we have
\begin{align*}
  H(y; (\alpha_1, \beta_1), (\alpha_2, \beta_2))
  = & \left(\frac{\beta_1}{\beta_2}\right)^{\alpha_2}
      \sum_{k = 0}^{\infty} \binom{\alpha_2 + k - 1}{k} \left(1-\beta_1/\beta_2\right)^k
      G(y/\beta_1; k + \alpha_1 + \alpha_2)\\
    &- \left(\frac{\beta_1}{\beta_2}\right)^{\alpha_2}
      \sum_{k = 0}^{\infty} \binom{\alpha_2 + k - 1}{k} \left(1-\beta_1/\beta_2\right)^k
      G(y/\beta_1; k + \alpha_1 + \alpha_2 + 1)\\
  = & \left(\frac{\beta_1}{\beta_2}\right)^{\alpha_2}
      \sum_{k = 0}^{\infty} \binom{\alpha_2 + k - 1}{k} (1-\beta_1/\beta_2)^k
      \frac{(y/\beta_1)^{k+\alpha_1+\alpha_2} e^{-y/\beta_1}}
      {\Gamma(k + \alpha_1 + \alpha_2 + 1)}\\
  = & \left(\frac{\beta_1}{\beta_2}\right)^{\alpha_2}
    \left(\frac{y}{\beta_1}\right)^{\alpha_1 + \alpha_2}
    e^{-y/\beta_1}
    \sum_{k=0}^{\infty}
    \frac{\binom{\alpha_2 + k - 1}{k} \left[y(1/\beta_1 - 1/\beta_2)\right]^k}
         {\Gamma(k + \alpha_1 + \alpha_2 + 1)},
\end{align*}
where the second equation follows from
\[
G(y; \alpha) - G(y; \alpha + 1) = \frac{y^\alpha e^{-y}}{\Gamma(\alpha + 1)}.
\]
Because of the identities
\[
\binom{\alpha_2 + k + 1}{k} = \frac{(\alpha_2)_k}{k!}
\]
and
\[
  \Gamma(k + \alpha_1 + \alpha_2 + 1) = (\alpha_1 + \alpha_2 + 1)_k
  \Gamma(\alpha_1 + \alpha_2 + 1),
\]
we finally obtain
\begin{align*}
  H(y; (\alpha_1, \beta_1), (\alpha_2, \beta_2))
    = & \frac{y^{\alpha_1 + \alpha_2} e^{-y/\beta_1}}
    {\beta_1^{\alpha_1} \beta_2^{\alpha_2} \Gamma(\alpha_1 + \alpha_2 + 1)}
    \sum_{k=0}^{\infty} \frac{(\alpha_2)_k [y(1/\beta_1 - 1/\beta_2)]^k}
    {(\alpha_1 + \alpha_2 + 1)_k k!}\\
    = & \frac{y^{\alpha_1 + \alpha_2} e^{-y/\beta_1}}
    {\beta_1^{\alpha_1} \beta_2^{\alpha_2} \Gamma(\alpha_1 + \alpha_2 + 1)}
    {}_1F_1(\alpha_2; \alpha_1 + \alpha_2 + 1; y(1/\beta_1 - 1/\beta_2)).
\end{align*}
\end{proof}

\begin{Corollary}\label{cor:renew}
  Let $\{ U_k \}_{k \geq 1}$ be independent and identically distributed random
  variables distributed as a mixture of $S$ exponential distributions,
  $\mathcal{E}xp(\beta_s)$ with weight $p_s$, $s=1, \dots, S$, and
  $\sum_{s=1}^{S} p_s = 1$. For $n \geq 0$,
  \begin{align*}
    & \Pr\big(N(t) = n\big) \\
    =& \sum_{k_1=0}^{n} \sum_{k_2=0}^{n-k_1} \dots
      \sum_{k_{S-1}=0}^{n-k_1- \dots -k_{S-2}} \left[ \left( \sum_{s=1}^{S} p_s
      H_s(t; (k_1, \beta_1), \dots, (k_S, \beta_S)) \right) \frac{n!}{k_1!k_2! \dots k_S!}
    p_1^{k_1} \dots p_S^{k_S} \right],
  \end{align*}
  where
  \begin{align*}
    H_s(t; (k_1, \beta_1), \dots, (k_S, \beta_S))
    =\, & F_S(t; (k_1, \beta_1), \dots, (k_s, \beta_s), \dots, (k_S, \beta_S)) \\
      & - F_S(t; (k_1, \beta_1), \dots, (k_s + 1, \beta_s), \dots, (k_S, \beta_S)),
  \end{align*}
  $s = 1, \ldots, S$,
  $k_S = n-k_1- \dots - k_{S-1}$, and
  $F_S$ is the distribution function of the convolution of $S$ independent
  gamma variables with shape parameter $(k_1, \dots, k_S)$ and scale
  parameter $(\beta_1, \dots, \beta_S)$.
\end{Corollary}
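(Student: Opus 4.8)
The plan is to follow exactly the strategy of the proof of Proposition~\ref{thm:renew}, replacing the two-component mixture by the $S$-component one and the binomial allocation by a multinomial one. First I would write $\Pr(N(t)=n)=\Pr\big(\sum_{k=1}^{n}U_k\le t,\ \sum_{k=1}^{n+1}U_k>t\big)$ and condition on the type of the $(n+1)$-th holding time: since $U_{n+1}$ is $\mathcal{E}xp(\beta_s)$ with probability $p_s$,
\[
\Pr(N(t)=n)=\sum_{s=1}^{S}p_s\,\Pr\Big(\sum_{k=1}^{n}U_k\le t,\ \sum_{k=1}^{n}U_k+E_s>t\Big),
\]
where $E_s\sim\mathcal{E}xp(\beta_s)$ is independent of $\{U_k\}_{k\ge1}$. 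Because $E_s>0$, each summand equals $\Pr\big(\sum_{k=1}^{n}U_k\le t\big)-\Pr\big(\sum_{k=1}^{n}U_k+E_s\le t\big)$, exactly as in the $S=2$ case.

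Next I would condition on how the first $n$ holding times split among the $S$ exponential components. Writing $k_j$ for the number of $U_1,\dots,U_n$ drawn from $\mathcal{E}xp(\beta_j)$, the vector $(k_1,\dots,k_S)$ is multinomial$(n;p_1,\dots,p_S)$, and conditionally on this allocation $\sum_{k=1}^{n}U_k$ is distributed as the convolution of $S$ independent gamma variables with integer shapes $(k_1,\dots,k_S)$ and scales $(\beta_1,\dots,\beta_S)$, since the $U_j$'s assigned to component $j$ sum to an Erlang$(k_j,\beta_j)$ variable and these sums are independent across $j$. Hence the conditional value of $\Pr\big(\sum_{k=1}^{n}U_k\le t\big)$ is $F_S(t;(k_1,\beta_1),\dots,(k_S,\beta_S))$. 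Adding the independent $E_s\sim\mathcal{E}xp(\beta_s)$ merely increments the $s$-th shape from $k_s$ to $k_s+1$, so the conditional value of $\Pr\big(\sum_{k=1}^{n}U_k+E_s\le t\big)$ is $F_S(t;(k_1,\beta_1),\dots,(k_s+1,\beta_s),\dots,(k_S,\beta_S))$; the difference is precisely $H_s(t;(k_1,\beta_1),\dots,(k_S,\beta_S))$.

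Finally I would average over the allocation, which contributes the multinomial weight $\frac{n!}{k_1!\cdots k_S!}p_1^{k_1}\cdots p_S^{k_S}$ and ranges over all nonnegative integer vectors $(k_1,\dots,k_S)$ summing to $n$; parameterising that index set by $\sum_{k_1=0}^{n}\sum_{k_2=0}^{n-k_1}\cdots\sum_{k_{S-1}=0}^{n-k_1-\cdots-k_{S-2}}$ with $k_S=n-k_1-\cdots-k_{S-1}$ yields the stated formula. There is no real obstacle here beyond bookkeeping; the only points needing care are that the type of $U_{n+1}$ and the allocation of $U_1,\dots,U_n$ are jointly independent (automatic from the i.i.d.\ mixture structure), so the two conditioning steps combine cleanly, and that a shape-$0$ component in $F_S$ must be read as a point mass at $0$ so that the expression is well defined when some $k_j=0$ (for instance the case $n=0$, where the formula must reduce to $\sum_{s}p_s e^{-t/\beta_s}$). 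As a consistency check, specialising to $S=2$ and invoking the closed form~\eqref{eq:recur} for $H$ recovers Proposition~\ref{thm:renew}.
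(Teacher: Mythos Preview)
Your proposal is correct and follows essentially the same route as the paper: condition on the exponential type of $U_{n+1}$ to obtain the outer sum over $s$, rewrite each term as a difference of two distribution functions, and then condition on the multinomial allocation of $U_1,\dots,U_n$ among the $S$ components to express each difference as a weighted sum of the $H_s$'s. The paper's proof is more terse, but your added remarks on the shape-$0$ convention and the $S=2$ consistency check are useful and do not alter the underlying argument.
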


\begin{proof}
  Note that
  \begin{equation*}
    \Pr(N(t) = n)
    = \Pr \left(\sum_{k=1}^{n} U_k \leq t,
      \sum_{k=1}^{n+1} U_k > t \right)
    = \sum_{s=1}^S p_s \Pr \left(\sum_{k=1}^{n} U_k \leq t,
      \sum_{k=1}^{n} U_k + E_s > t\right),
  \end{equation*}
  where $E_s$ are independent of $\{ U_k \}_{k \geq 1}$ with
  an $\mathcal{E}xp(\beta_s)$ distribution and
  \begin{equation*}
    \begin{aligned}
      &\Pr \left(\sum_{k=1}^{n} U_k \leq t, \sum_{k=1}^{n} U_k + E_s > t\right)
      = \Pr \left( \sum_{k=1}^{n} U_k \leq t \right)
      - \Pr \left( \sum_{k=1}^{n} U_k + E_s \leq t \right)\\
      =& \sum_{k_1=0}^{n} \sum_{k_2=0}^{n-k_1} \dots \sum_{k_{S-1}=0}^{n-k_1- \dots -k_{S-2}}
      H_s(t, (k_1, \beta_1), \dots, (k_S, \beta_S)) \frac{n!}{k_1!k_2! \dots k_S!}
    p_1^{k_1} \dots p_S^{k_S}.
    \end{aligned}
  \end{equation*}
  By substitution, we can complete proof.
\end{proof}

\begin{table}[tbp]
  \caption{Timing comparison (in microseconds) using different methods in $H$
    to evaluate $\Pr(N(10) = n)$ in the renewal process application when $S = 2$.}
  \label{tab:rnew2}
\begin{center}
  \begin{tabular}{rrrrrr}
    \toprule
    \multicolumn{3}{c}{Parameters} & \multicolumn{3}{c}{Time Informations}\\
    \cmidrule(r){1-3} \cmidrule(r){4-6}
    $\beta_1$ & $\beta_2$ & $n$ & Mathai & Moschopoulos & Proposed\\
    \midrule
  0.4 & 0.3 & 27 & 4,993 & 36,314 & 33 \\ 
   &  & 32 & 5,788 & 43,129 & 34 \\ 
   &  & 40 & 6,981 & 54,637 & 41 \\ 
  4 & 0.3 & 10 & 2,675 & 23,613 & 15 \\ 
   &  & 18 & 4,667 & 42,529 & 23 \\ 
   &  & 30 & 7,525 & 65,408 & 58 \\ 
  4 & 3 & 2 & 308 & 951 & 9 \\ 
   &  & 3 & 380 & 1,497 & 15 \\ 
   &  & 5 & 556 & 3,226 & 16 \\ 
    \bottomrule
  \end{tabular}
\end{center}
\end{table}

To demonstrate the computational efficiency of the result of
Proposition~\ref{thm:renew}, we performed a numerical study for $S = 2$.
The scale parameters of the exponential distributions were set to be
$(\beta_1, \beta_2) \in \{(0.4, 0.3), (4, 0.3), (4, 3)\}$.
For each configuration, we evaluated $\Pr(N(t) = n\}$ for $t = 10$ and
three $n$~values corresponding to the 20th, 50th, and 90th percentiles
of a random sample of size 1,000 from the distribution of $N(10)$.
The $H$ function was evaluated with Mathai's method, Moschopoulos'
method and the proposed result in Proposition~\ref{thm:renew}.
The numerical results are identical and the median timing results from
100 replicates are summarized in Table~\ref{tab:rnew2}.
The method in Proposition~\ref{thm:renew}, which bypasses evaluating two
distribution function of the convolution of two exponential variables, shows a
huge advantage, speeding up Mathai's method by a factor of about 60--200.

\begin{table}[tbp]
  \caption{Timing comparison (in milliseconds) using different methods in $H$
    to evaluate $\Pr(N(10) = n)$ in the renewal process application when $S =3$.}
  \label{tab:rnew3}
\begin{center}
  \begin{tabular}{rrrrrrrrr}
    \toprule
    \multicolumn{4}{c}{Parameters} & \multicolumn{3}{c}{Time Informations}\\
    \cmidrule(r){1-4} \cmidrule(r){5-7}
    $\beta_1$ & $\beta_2$ & $\beta_3$ & $n$ & Mathai & Mosch. & Approx. & Exact Value & Relative Error\\
    \midrule
  0.4 & 0.3 & 0.2 & 36 & 95,440 & 3,033 & 669 & 4.2456e-02 & 3.3887e-03 \\ 
   &  &  & 42 & 132,284 & 4,097 & 895 & 5.7594e-02 & -2.5825e-03 \\ 
   &  &  & 51 & 189,053 & 5,901 & 1,285 & 2.2793e-02 & -4.7991e-04 \\ 
  4 & 0.3 & 0.2 & 10 & 8,729 & 376 & 63 & 2.8303e-02 & 3.6682e-03 \\ 
   &  &  & 19 & 33,312 & 1,249 & 218 & 3.3972e-02 & -2.9601e-05 \\ 
   &  &  & 35 & 113,068 & 3,906 & 674 & 1.4896e-02 & -1.0625e-02 \\ 
  4 & 3 & 0.2 & 5 & 2,791 & 94 & 15 & 5.8889e-02 & 3.3020e-04 \\ 
   &  &  & 10 & 12,937 & 382 & 62 & 6.2835e-02 & -2.0693e-03 \\ 
   &  &  & 19 & 49,028 & 1,229 & 203 & 2.1189e-02 & 1.5217e-03 \\ 
  4 & 3 & 2 & 2 & 31 & 6 & 1 & 1.2854e-01 & 1.2242e-03 \\ 
   &  &  & 4 & 233 & 26 & 3 & 1.8740e-01 & -1.9957e-03 \\ 
   &  &  & 7 & 829 & 62 & 11 & 7.2131e-02 & 1.9813e-03 \\ 
    \bottomrule
  \end{tabular}
\end{center}
\end{table}

For $S > 2$, we performed a similar study with Mathai's and Moschopoulos'
method for evaluating $H_3$ using the result in Corollary~\ref{cor:renew}.
For comparison in accuracy and speed, the approximation method of
\citet{barnabani2017approximation} in evaluating $H_3$ was also included.
The scale parameters of the exponential distributions were set to be
\[
(\beta_1, \beta_2, \beta_3) \in \{ (0.4, 0.3, 0.2), \;
(4, 0.3, 0.2), \; (4, 3, 0.2), \; (4, 3, 2)\}.
\]
Again, for each configuration, we evaluated $\Pr(N(10) = n)$ for three $n$
values corresponding to the 20th, 50th, and 90th percentiles of a random
sample of size 1000 from the distribution of $N(10)$.
The median timing results from 100 replicates and the relative error of the
approximation method are summarized in Table~\ref{tab:rnew3}.
Similar to the comparison reported in Section~\ref{sec:exact}, Moschopoulos'
method is over 10~times faster than Mathai's method in all the cases.
A small sumation of $\beta$'s requires longer computaion time.
The approximation method is over 6~times faster than Moschopoulo's method,
with relative error less than 1~percent.

\section{Conclusions}
\label{sec:con}

We reviewed two exact methods and one approximation method for evaluating
the density and distribution of convolutions of independent gamma variables.
From our study, the method of \citet{mathai1982storage} is the fastest in the
case of $n = 2$ because of efficient GSL implementation of the univariate
Kummer confluent hypergeometric function; when $n > 3$, the method of
\citet{moschopoulos1985distribution} is faster than Mathai's method.
The fast approximation method \citet{barnabani2017approximation} is quite
accurate, which provides a useful tool for applications with $n > 3$.
Implementations of the reviewed methods are available in R package
\texttt{coga} \citep{Rpkg:coga}, which is built on C++ code for fast speed.

The result in Proposition~\ref{thm:renew} for the distribution of the
event count in the renew process application with holding time following a
mixture of exponential distribution is an extremely fast evaluation.
One side-benefit of the proposition is formula~\eqref{eq:recur} for
the difference of distribution functions of convolution of two independent
Erlang distributions with shape parameters that differ by~$1$. It can be
evaluated accurately and efficiently using the GSL implementation of the
confluent hypergeometric function ${}_1F_1$. This difference plays an
important role in computation of distribution of the occupation times for a
certain continuous time Markov chain \citep{pozdnyakov2017discretely},
the defective density in generalized integrated telegraph processes
\citep{zacks2004generalized}, and the first-exit time in a compound process
\citep{perry1999firstexit}.

\singlespacing
\bibliographystyle{Chicago}
\bibliography{coga}

\end{document}